\def\dd{\text{d}}
\newtheorem{lemma}{Lemma}
\newtheorem{theorem}{Theorem}
\newtheorem{corollary}{Corollary}
\newtheorem{remark}{Remark}
\newtheorem{proposition}{Proposition}
\title{\LARGE \bf
Stochastic filters based on hybrid approximations of multiscale stochastic reaction networks
}
\author{Zhou Fang, Ankit Gupta, and Mustafa Khammash
\thanks{*This work is funded by the Swiss National Science Foundation under grant number 182653.}
\thanks{ Zhou Fang, Ankit Gupta, and Mustafa Khammash are with the Department of Biosystems Science and Engineering at ETH-Z\"urich.}%
\thanks{ \texttt{zhou.fang@bsse.ethz.ch},  \texttt{ankit.gupta@bsse.ethz.ch},  \texttt{mustafa.khammash@bsse.ethz.ch}}
}
\begin{document}

\maketitle
\thispagestyle{empty}
\pagestyle{empty}

\begin{abstract}

We consider the problem of estimating the dynamic latent states of an intracellular multiscale stochastic reaction network from time-course measurements of fluorescent reporters. We first prove that accurate solutions to the filtering problem can be constructed by solving the filtering problem for a reduced model that represents the dynamics as a hybrid process. The model reduction is based on exploiting the time-scale separations in the original network, and it can greatly reduce the computational effort required to simulate the dynamics. This enables us to develop efficient particle filters to solve the filtering problem for the original model by applying particle filters to the reduced model. We illustrate the accuracy and the computational efficiency of our approach using a numerical example.

\end{abstract}

\section{introduction}

In the past few decades, scientists' ability to look into the dynamic behaviors of a living cell has been greatly improved by the fast development of fluorescent technologies and advances in microscopic techniques.
Despite this big success, light intensity signals observed in a microscope can only report the dynamics of a small number of key components in a cell, such as fluorescent proteins and mRNAs, and, therefore, leave other dynamic states of interest, e.g., gene (on/off) state, indirectedly observable.
As a result, it is urgent to build efficient stochastic filters to estimate latent states of intracellular biochemical reaction systems from these partial observations.

Although it is possible to solve the filtering (and inference) problem accurately for intracellular systems with particle filters (see \cite{doucet2009tutorial} for particle filters), the required computational effort to simulate the system in the sampling step usually prevents it from being used on-line due to the high complexity of practical models.
Motivated by this obstacle, some methods are proposed to obtain computationally efficient filters by approximating the underlying Markov jump process by a more tractable model, such as a diffusion process \cite{chuang2013robust,golightly2006bayesian,calderazzo2019filtering} or a block model \cite{boys2008bayesian}.
All these methods are shown to be efficient for broad classes of bio-chemical reaction systems, however, their drawbacks are also clear --- the diffusion process strategy loses its validity in low copy number scenarios, and the block model methodology may not work well for highly interconnected nonlinear biological circuits.
Unfortunately, some reaction systems involve species with low copy numbers and densely interconnected biological circuits, which precludes the utilization of previous methods to these systems and require researchers to build new solutions. 

In this paper, we propose another strategy to obtain efficient particle filters based on hybrid approximations of multiscale stochastic reaction networks.
Specifically, the model reduction technique introduced in \cite{kang2013separation} is applied to approximating the underlying stochastic reaction system by a piecewise deterministic model, which can greatly reduce the computational effort required to simulate the system.
Also, we  prove that the solution to the filtering problem for the original system can be constructed by solving the filtering problem for the reduced model if some mild conditions are satisfied. 
Both of these facts enable us to develop efficient particle filters to solve the filtering problem for the original model by applying particle filters to the reduced model. 

It is worth noting that the idea of applying time-scale separation techniques to reduce underlying models to make particle filters computationally feasible has already been proposed in \cite{park2008problem,park2010dimensional,park2011particle} for diffusion type stochastic models, and its efficiency is proven in the literature \cite{imkeller2013dimensional}.
Compared with these works, our paper considers a different type of underlying model, the Markov jump process, and provides a much-simplified proof for the convergence of the approximate filters.
All these references and our paper show the efficiency of applying the time-scale separation technique to the filtering problem for multiscale systems. 

The rest of this paper is organized as follows.
In Section \ref{Sec. preliminary}, we introduce some basic notations and review the chemical reaction network theory and the filtering theory.
In Section \ref{Sec. main results}, we present our main results. 
We first prove that the accurate solution to the filtering problem can be accurately approximated by the solution to the filtering problem for the reduced model under some mild condition.
Then, based on this result, we establish an efficient particle filter to solve the filtering problem for the original model by applying the particle filter to the reduced model.
For the sake of readability, we put the proof of our main result in the appendix.
In Section \ref{Sec. numerical simulation}, a numerical example is presented to illustrate our approach.
Finally, Section \ref{Sec. conclusion} concludes this paper.

\section{Preliminary} {\label{Sec. preliminary}}

\subsection{Notations}
We first denote a natural filtered probability space by $( \Omega,  \mathcal F, \{\mathcal F_{t}\}_{t\geq 0}, \mathbb P)$, where $\Omega$ is the sample space, $\mathcal F$ is the $\sigma$-algebra, $\{\mathcal F_{t}\}_{t\geq 0}$ is the filtration, and $\mathbb P$ is the natural probability.
Also, we term $\mathbb N_{>0}$ as the set of positive integers, $\mathbb R^{n}$ with $n$ being a positive integer as the space of $n$-dimensional real vectors, $\|\cdot\|$ as the Euclidean norm, $|\cdot|$ as the absolute value notation, $s\wedge t$ (for any $s,t \in \mathbb{R}$ ) as $\min(s,t)$, and $s\vee t$ as $\max(s,t)$.
For any positive integer $n$ and any $t>0$, we term $\mathbb D_{ \mathbb R^{ n}}[0,t]$ as the Skorokhod space that consists of all $\mathbb R^n$ valued cadlag functions on $[0,t]$ and $\mathbb D_{ \mathbb R^{ n}}[0,\infty)$ as the Skorokhod space that consists of all $\mathbb R^n$ valued cadlag functions on $[0,\infty)$.

\subsection{Stochastic chemical reaction networks} 

We consider an intracellular system undergoing reactions
\begin{equation*}
v_{1,j} S_1 + \dots + v_{n,j} S_n \ce{ ->[$k_{j}$]} v'_{1,j} S_1 + \dots + v'_{n,j} S_n, ~~ j=1,\dots, r,
\end{equation*}
where $S_i$ ($i=1,\dots,n$) are distinguished species in the systems, $v_{i,j}$ and $v'_{i,j}$ are non-negative integers, called stoichiometric coefficients, $k_{j}$ are the reaction constants, and $r$ is the number of reactions.
Also, we name the linear combination of species (e.g, $v_{1,j} S_1 + \dots + v_{n,j} S_n$).
The topology of the above reaction equations can be fully represented by a triplet $\{\mathcal S, \mathcal C, \mathcal R\}$, called a chemical reaction network \cite{feinberg2019foundations}, where
\begin{itemize}
	\item $\mathcal{S}$ is the species set $\{S_1, S_2, \dots, S_n\}$ whose elements represent distinct substances,
	\item $\mathcal C$ is the complex set $\bigcup_{j=1}^{r} \{v_{\cdot j} , v'_{\cdot j} \}$, where $v_{\cdot j}\triangleq \left(v_{1,j}, \dots, v_{n,j} \right)^{\top}$ and $v'_{\cdot j}\triangleq \left(v'_{1,j}, \dots, v'_{n,j} \right)^{\top}$, indicating substrate complexes and product complexes,
	\item $\mathcal R$ is the reaction set $\{v_{\cdot,1} \to v'_{\cdot1}, \dots, v_{\cdot r} \to v'_{\cdot r}\}$ showing the connections between the complexes.
\end{itemize}
Let $X(t)=\left(X_1(t),X_2(t), \dots, X_n(t)\right)^{\top}$ be the numbers of molecules of these species at time $t$, then the system's dynamics following mass-action kinetics can be expressed as \cite{anderson2011continuous}
\begin{equation*}
X(t)= X(0)+\sum_{j=1}^{r} (v'_{\cdot j}-v_{\cdot j}) R_j\left( \int_0^t \lambda_j (X(s))\dd s \right)
\end{equation*}
where $R_{j}(t)$ ($j=1,\dots,r$) are mutually independent unit rate Poisson random processes, $\lambda_j (x)\triangleq k_{j} \prod_{i=1}^n \frac{x_i!}{\left(x_i-v_{i,j}\right)!}\mathbbold 1_{\{x_i\geq v_{i,j}\}}$ with $\mathbbold{1}_{\{\cdot\}}$ the indicator function, and the initial condition $X(0)$ is subject to a particular known distribution.

\subsection{Model reduction via the time scale separation}
In a practical bio-chemical reaction system, different species can vary a lot in abundance, and rate constants can also vary over several orders of magnitude. 
To normalize these quantities, we term $N$ as a scaling factor, $\alpha_i$ (for $i=1,\dots, n$) as the magnitude of the $i$-th species such that $X^{N}_i(t)\triangleq N^{-\alpha_i}X_{i}(t) = O(1)$, $\beta_j$ (for $j=1,\dots,r$) as the magnitude of the reaction constant $k_{j}$ such that $k'_j\triangleq k_j N^{-\beta_j}=O(1)$,
$\gamma$ as the time scale parameter, and $X^{N,\gamma}(t) \triangleq X^{N}(tN^{\gamma})$.
Then, in the new coordinate, the dynamic equation can be expressed as \cite{kang2013separation}
\begin{align}{\label{eq. scaling stochastic dynamics}}
X^{N,\gamma}(t) =& X^{N,\gamma}(0)   +\sum_{j=1}^{r} \Lambda^{N} (v'_{\cdot j}-v_{\cdot j}) \\
& \qquad  \qquad \quad \times R_j\left( \int_0^{t} N^{\gamma+\tilde \rho_j} \lambda^N_{j}(X^{N,\gamma}(s)) \dd s \right) \notag
\end{align}
where $\Lambda^{N}= \text{diag}(N^{-\alpha_1},\dots,N^{-\alpha_s})$, $\alpha=(\alpha_1,\dots,\alpha_n)$, $\lambda^{N}_j (x)\triangleq k'_j \prod_{i=1}^{n} \prod_{\ell=0}^{v_{i,j}-1}  (x_{i}-\ell N^{-\alpha_i})\mathbbold 1_{\left\{x_i\geq N^{-\alpha_i}v_{i,j}\right\}} $ are of order $O(1)$, and $\tilde \rho_j=\beta_j + v^{\top}_{\cdot j} \alpha$.
For this system, we denote its transition kernel by $K^{N,\gamma_1}_{t}(x,A)\triangleq \mathbb P ( X^{N,\gamma_1}(t)\in A | X^{N,\gamma_1}(0)=x )$.

Following the notations in \cite{kang2013separation}, we further term $\Gamma_i^{+}\triangleq \{ j | v'_{i,j}>v_{i,j} \}$ as the set of reactions that result in an increase of the $i$-th species, $\Gamma_i^{-}\triangleq \{ j | v'_{i,j}<v_{i,j} \}$ as the set of reactions that result in a decrease of the $i$-th species.
Then, the constant $\alpha_i-\max_{j\in \Gamma_i^{+}\cup \Gamma_i^{-}} \left(\beta_j+v_{\cdot j}^{\top} \alpha \right)$ is the time scale of the $i$-th substance, i.e., the time scale where $X^{N,\gamma}_i (t)$ evolves at the rate of $O(1)$.
Moreover, we term $\gamma_1\triangleq \min_i
\left( \alpha_i-\max_{j\in \Gamma_i^{+}\cup \Gamma_i^{-}} \left(\beta_j+v_{\cdot j}^{\top} \alpha \right)\right)$ as the parameter of the fastest time scale of the system and
$D^{\tilde \alpha} \triangleq \text{diag}(\dots \mathbbold 1_{\{\alpha_i=\tilde \alpha\}}\dots)$ as a diagonal matrix indicating whether a species is at the scale of $N^{\tilde \alpha}$.
In the rest of this paper, we focus on the dynamic behaviors of a reaction system at the fastest time scale $\gamma_1$.

For a large scaling factor $N$, by neglecting all slow reactions ($\gamma_1+\tilde \rho_j<0$) and approximating fast reactions ($\gamma_1+\tilde \rho_j>0$) by a continuous process, we can arrive at a simplified piecewise deterministic process as follows.
\begin{small}
\begin{align}
X^{\gamma_{1}}(t)=& \lim_{N\to \infty} X^{N}(0) \\
&
+ \sum_{j : \gamma_1+\tilde \rho_j>0} \int_0^t \lambda'_j( X^{\gamma_1}(s))  D^{\gamma_1+\tilde \rho_j}\left(v'_{\cdot j}-v_{\cdot j}\right)\dd s  \notag\\
& +  \sum_{j : \gamma_1+\tilde \rho_j=0 }
R_{j}\left(\int_0^t \lambda'_j( X^{\gamma_1}(s)) \dd s \right) D^0\left(v'_{\cdot j}-v_{\cdot j}\right). \notag
\end{align}
\end{small}We denote the transition kernel of the above process by $K^{\gamma_1}_{t}(x,A)\triangleq \mathbb P ( X^{\gamma_1}(t)\in A | X^{\gamma_1}(0)=x )$.
If we further assume the initial conditions to satisfy that 
\begin{equation}{\label{eq. assumption initial conditons}}
\lim_{N\to \infty} X^{N}(0) \text{ exists,}
\end{equation}
and that the above systems are non-explosive, i.e.,
\begin{equation}{\label{eq. assumption infinite explosion time gamma1}}
\lim_{c\to\infty}\tau^{N,\gamma_1}_{c}=\infty,  \quad \text{ and } \quad
\lim_{c\to\infty} \tau^{\gamma_1}_{c}=\infty,   \quad  \text{a.s.},
\end{equation}
where $\tau^{N,\gamma_1}_{c}\triangleq \inf \left\{t~|~\|X^{N,\gamma_1}(t)\|\geq c \right\}$ and $ \tau^{\gamma_1}_{c}\triangleq \inf \left\{t~|~ \|X^{\gamma_1}(t)\|\geq c\right\}$,
then the original model $X^{N,\gamma_1}(\cdot)$ converges in distribution to the reduced model $X^{\gamma}(\cdot)$ on any finite time interval. 

\begin{proposition}[Adapted from \cite{kang2013separation}]{\label{cor convergence on any finite interval}}
	On any finite time interval $[0,t]$, if \eqref{eq. assumption initial conditons} and \eqref{eq. assumption infinite explosion time gamma1} hold, then $X^{N,\gamma_1}(\cdot )\Rightarrow X^{\gamma_{1}}(\cdot)$ in the sense of the Skorokhod topology.
\end{proposition}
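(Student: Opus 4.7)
The plan is to follow the standard martingale-plus-functional-LLN strategy of Kurtz, specialized to multiscale reaction networks as in \cite{kang2013separation}, by partitioning the reactions into three regimes according to the sign of $\gamma_1+\tilde\rho_j$: \emph{fast} ($>0$), \emph{critical} ($=0$), and \emph{slow} ($<0$). Fast reactions should average out to deterministic integrals, critical reactions survive as Poisson jumps, and slow reactions vanish. The projection matrix $D^{\gamma_1+\tilde\rho_j}$ appearing in the definition of $X^{\gamma_1}$ captures exactly the fact that only species whose natural scale $\alpha_i$ matches the reaction scale $\gamma_1+\tilde\rho_j$ receive a non-trivial contribution in the limit.

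The first step is localization. For each $c$, I would work on the interval $[0,t\wedge \tau^{N,\gamma_1}_c\wedge \tau^{\gamma_1}_c]$ and, after establishing convergence of the stopped processes, use \eqref{eq. assumption infinite explosion time gamma1} to let $c\to\infty$ and recover the full statement. On the bounded set $\{\|x\|\le c\}$ the polynomial factor $\prod_{\ell=0}^{v_{i,j}-1}(x_i-\ell N^{-\alpha_i})$ converges uniformly to $x_i^{v_{i,j}}$ and the indicator in $\lambda^N_j$ is eventually $1$ away from the relevant coordinate hyperplanes, so $\lambda^N_j\to\lambda'_j$ uniformly on compacta, which is the only regularity I need from the scaling of \eqref{eq. scaling stochastic dynamics}.

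Next, I apply the functional LLN $N^{-1}R(N\cdot)\to\mathrm{id}$ (uniform on compacts, a.s.) to each regime. For a fast reaction the Poisson term $\Lambda^{N}(v'_{\cdot j}-v_{\cdot j}) R_j\!\left(N^{\gamma_1+\tilde\rho_j}\int_0^{\cdot}\lambda^N_j(X^{N,\gamma_1}(s))\,\dd s\right)$ is asymptotically equivalent to $\Lambda^{N} N^{\gamma_1+\tilde\rho_j}(v'_{\cdot j}-v_{\cdot j})\int_0^{\cdot}\lambda'_j(X^{\gamma_1}(s))\,\dd s$; the prefactor in coordinate $i$ is $N^{\gamma_1+\tilde\rho_j-\alpha_i}$, which vanishes unless $\alpha_i=\gamma_1+\tilde\rho_j$, and retaining exactly these coordinates is the action of $D^{\gamma_1+\tilde\rho_j}$. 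For a critical reaction the unit-rate clock $R_j$ is preserved and $\Lambda^{N}(v'_{\cdot j}-v_{\cdot j})\to D^{0}(v'_{\cdot j}-v_{\cdot j})$ by the same scale-matching. For a slow reaction the integrated intensity is $O(N^{\gamma_1+\tilde\rho_j})\to 0$, so the term disappears. Tightness of the stopped sequence in $\mathbb D_{\mathbb R^n}[0,t]$ then follows from Aldous' criterion, because the surviving jump magnitudes and rates are uniformly bounded on the localized interval, and local Lipschitz continuity of the limiting $\lambda'_j$ yields uniqueness of the PDMP, so any subsequential limit coincides with $X^{\gamma_1}$.

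The hardest part is transferring the LLN from deterministic integrands to the random integrand $\lambda^N_j(X^{N,\gamma_1}(\cdot))$, which is needed because the scaled process $X^{N,\gamma_1}$ appears inside the Poisson clock on both sides of the comparison. This requires joint tightness of the random Poisson clocks together with a Gronwall-type closing estimate on the difference $X^{N,\gamma_1}-X^{\gamma_1}$ exploiting the local Lipschitz bound on $\lambda'_j$. This step is precisely the content of the convergence theorem proved in \cite{kang2013separation}, which I would invoke directly rather than reprove, so that the exposition of the present paper can remain focused on the consequences for the filtering problem.
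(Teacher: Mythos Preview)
Your proposal is correct and takes essentially the same approach as the paper: both reduce the statement to the convergence theorem of \cite{kang2013separation}. The paper's proof is a single line citing Theorem~4.1 of \cite{kang2013separation} together with the portmanteau lemma, whereas you unpack the mechanism behind that theorem (localization via $\tau^{N,\gamma_1}_c$, the three-regime split by the sign of $\gamma_1+\tilde\rho_j$, the functional LLN $N^{-1}R(N\cdot)\to\mathrm{id}$, Aldous tightness, and a Gronwall closing) before ultimately invoking the same reference; the only ingredient you omit is the portmanteau lemma, which the paper uses to pass from the stopped-process formulation in \cite{kang2013separation} to convergence on the fixed interval $[0,t]$ under the non-explosion hypothesis \eqref{eq. assumption infinite explosion time gamma1}.
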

\begin{proof}
	It follows from \cite[Theorem 4.1]{kang2013separation} and the portmanteau lemma.
\end{proof}

Also, the finite-dimensional distribution of $X^{N,\gamma_1}(\cdot)$ converges to the finite-dimensional distribution of $X^{\gamma_1}(\cdot)$.

\begin{corollary}\label{cor convergence on discrete time point}
	Let $i$ be a positive integer and $\{t_j\}_{0<j\leq i}$ be a sequence of scalars. If \eqref{eq. assumption initial conditons} and \eqref{eq. assumption infinite explosion time gamma1} hold, then $X^{N,\gamma_1}_{t_1:t_i} \Rightarrow X^{\gamma_1}_{t_1:t_i}$ where $  X^{N,\gamma_1}_{t_1:t_i} \triangleq \left(X^{N,\gamma_1}(t_1), \dots, X^{N,\gamma_1}(t_i)\right)$ and $ X^{\gamma_1}_{t_1:t_i} \triangleq  \left(X^{\gamma_1}(t_1), \dots, X^{\gamma_1}(t_i)\right)$.
\end{corollary}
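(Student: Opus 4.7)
The plan is to deduce the corollary from Proposition~\ref{cor convergence on any finite interval} together with the standard fact that weak convergence in the Skorokhod topology on $\mathbb{D}_{\mathbb{R}^n}[0,T]$ implies convergence of the finite-dimensional marginals at every tuple of time points that are almost-sure continuity points of the limit process (see, e.g., Ethier--Kurtz or Billingsley). Concretely, I would take $T = t_i$ (so that Proposition~\ref{cor convergence on any finite interval} yields $X^{N,\gamma_1} \Rightarrow X^{\gamma_1}$ on $\mathbb{D}_{\mathbb{R}^n}[0,T]$) and then argue that each fixed time $t_j$, $j=1,\dots,i$, satisfies $\mathbb{P}(X^{\gamma_1}(t_j) = X^{\gamma_1}(t_j-)) = 1$.

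The continuity-at-$t_j$ claim is the only substantive step, and it follows from the structure of the limit PDMP $X^{\gamma_1}$. Between jumps, $X^{\gamma_1}$ evolves according to an ODE driven by the fluid terms indexed by $\{j:\gamma_1+\tilde\rho_j>0\}$; its only discontinuities come from the finitely many Poisson terms indexed by $\{j:\gamma_1+\tilde\rho_j=0\}$, whose jump times, conditional on the path of $X^{\gamma_1}$ up to that point, are governed by state-dependent intensities $\lambda'_j(X^{\gamma_1}(s))$. By the non-explosion assumption \eqref{eq. assumption infinite explosion time gamma1} the set of jump times is a.s.\ locally finite, and by a standard time-change argument each jump time has a continuous conditional distribution. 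Hence, for any deterministic $t_j$, $\mathbb{P}(t_j \text{ is a jump time of } X^{\gamma_1}) = 0$, so $t_j$ is an a.s.\ continuity point of $X^{\gamma_1}$.

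With the continuity points identified, I would invoke the continuous mapping theorem via the evaluation map $\pi_{t_1,\dots,t_i}: \mathbb{D}_{\mathbb{R}^n}[0,T] \to \mathbb{R}^{ni}$ defined by $\pi_{t_1,\dots,t_i}(x) = (x(t_1),\dots,x(t_i))$. This map is measurable and is continuous (in the Skorokhod topology) at every $x$ that is continuous at each $t_j$. Since almost every realization of $X^{\gamma_1}$ is continuous at each $t_j$, $\pi_{t_1,\dots,t_i}$ is continuous $\mathbb{P}_{X^{\gamma_1}}$-almost surely, so the continuous mapping theorem gives $X^{N,\gamma_1}_{t_1:t_i} \Rightarrow X^{\gamma_1}_{t_1:t_i}$ as claimed.

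The main (and only real) obstacle is the a.s.\ continuity of $X^{\gamma_1}$ at each deterministic $t_j$; once that is justified via the absolute continuity of the jump-time distributions of the PDMP, the rest is a direct application of Proposition~\ref{cor convergence on any finite interval} and the continuous mapping theorem. I would not expect to need any additional regularity beyond \eqref{eq. assumption initial conditons}--\eqref{eq. assumption infinite explosion time gamma1}.
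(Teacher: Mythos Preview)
Your proposal is correct and follows essentially the same approach as the paper, which simply cites Proposition~\ref{cor convergence on any finite interval} together with \cite[Chapter~3, Theorem~7.8]{ethier1986markov} (the standard result that Skorokhod convergence implies convergence of finite-dimensional distributions at a.s.\ continuity points of the limit). You are more explicit than the paper in verifying that each deterministic $t_j$ is an a.s.\ continuity point of the PDMP $X^{\gamma_1}$, a point the paper leaves implicit.
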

\begin{proof}
	It follows immediately from Proposition \ref{cor convergence on any finite interval} and \cite[Theorem 7.8 in Chapter 3]{ethier1986markov}.
\end{proof}

Note that the convergence result does not always hold on the infinite time horizon, because the limit model may miss bi-stability or some other phenomenon that the full model has (see \cite[Section VI]{gillespie2000chemical}).

In the above results, the non-explosivity condition \eqref{eq. assumption infinite explosion time gamma1} is not a very strong condition; there exist checkable sufficient conditions to ensure non-explosivity (see \cite{gupta2014scalable,anderson2018non}) that can deal with it efficiently.
In this paper, we do not provide specific conditions for non-explosivity in order to make the result general and avoid the deviation from this paper's main concern.

\begin{remark}{\label{remark compare computation load of the original model and the reduced one}}
	The computational complexity to simulate $X^{\gamma_1}(\cdot)$ can be greatly lower than the complexity to simulate $X^{N, \gamma_1}(\cdot)$, because the former avoids the exact simulation of fast reactions ($\gamma+\tilde \rho_j >0$), which consume a lot of computational resources to update the system at a rate proportional to $N^{\gamma_1+\tilde \rho_j}$.
\end{remark}

\subsection{Filtering problems, the change of measure method, and the particle filter.} 

In this paper, we assume that $m$ channels of light intensity signals of fluorescent reporters in a single cell can be observed discretely in time from a microscopy platform (e.g., the one in \cite{rullan2018optogenetic}).
The discrete-time property is mainly caused by the cell segmentation and tracking, which is required prior to the measuring step and can be time-consuming. 
Suppose that $\{t_{i}\}_{i\in\mathbb N_{>0}}$ are time points when the observation comes, and the value of each observation, $Y^{N, \gamma_1}(t_i)$, satisfies 
\begin{equation*}
	Y^{N,\gamma_1}_{\ell}(t_i)= h_{\ell}\left(X^{N,\gamma_1}(t_i)\right)  + W_{\ell}(t_i) \qquad \ell=1,\dots,m,
\end{equation*}
where $h_{\ell}$ are bounded Lipschitz continuous functions indicating the relation between the observation and the reaction process, and $W_{\ell}(t_i)$ ($\ell =1,\dots,m$ and $i\in \mathbb N_{>0}$) are mutually independent standard Gaussian random variables which are also independent of $R_{j}(\cdot)$ ($j=1,\dots, r$).

Our task is to provide optimal estimates to latent states of the system, e.g., $\phi(X^{N,\gamma_1}(t_i))$ where $\phi$ is a known measurable function, in the sense of mean square error based on observations up to the time $t_i$; in other words, we are going to calculate the conditional expectation $\pi^{N,\gamma_1}_{t_i}(\phi) \triangleq \mathbb E _{\mathbb P}\left[ \phi(X^{N,\gamma_1}(t_i)) ~|~ \mathcal Y^{N,\gamma_1}_{t_i} \right]$, where $\mathcal Y^{N,\gamma_1}_{t_i}$ is the filtration generated by the process $\{Y^{N,\gamma_1}(t_j)\}_{0 < j\leq t_i}$.

Similarly, we can define artificial readouts for the reduced model by 
\begin{equation*}
		Y^{\gamma_1}_{\ell}(t_i)= h_{\ell}\left(X^{\gamma_1}(t_i)\right)  + W_{\ell}(t_i) \qquad \ell=1,\dots,m,
\end{equation*}
and attempt to solve the corresponding filtering problem, i.e., calculating the conditional expectation $\pi^{\gamma_1}_{t_i}(\phi) \triangleq \mathbb E _{\mathbb P}\left[ \phi(X^{\gamma_1}(t_i)) ~|~ \mathcal Y^{\gamma_1}_{t_i} \right]$, where $\mathcal Y^{\gamma_1}_{t_i}$ is the filtration generated by the process $\{Y^{\gamma_1}(t_j)\}_{0 < j\leq t_i}$.
Note that in practical experiments, one can only observe $\{Y^{N,\gamma_1}_{\ell}(t_i)\}_{i\in\mathbb{ N}_{>0}}$ but never get $\{Y^{\gamma_1}_{\ell}(t_i)\}_{i\in\mathbb{ N}_{>0}}$.

The change of measure method is a powerful tool to deal with filtering problems, whose core idea lies in constructing a reference probability measure that orthogonalizes the underlying process and the observation.
For our problem, we construct an auxiliary random process $Z^{N,\gamma_1}(t_i)= \prod_{j=1}^{i} g\left(X^{N,\gamma_1}(t_j),Y^{N,\gamma_1}(t_j)\right)$, where $g(x,y)=\exp\left(\sum_{\ell=1}^{m} h_{\ell}(x) y_{\ell}-{h^2_{\ell}(x)}/{2}\right)$, and whose reciprocal is a martingale with respect to the filtration $\mathcal F_{t_i}$ under $\mathbb P$.
Based on this martingale, we define a reference probability measure, $\mathbb{ P}^{N,\gamma_1}$, by the Radon-Nikodym derivation $\left.\frac{\dd \mathbb{ P}^{N,\gamma_1}}{\dd \mathbb P}\right|_{\mathcal F_{t_i}}= \left(Z^{N,\gamma_1}(t_i)\right)^{-1}$, under which $\{Y^{N,\gamma_1}(t_i)\}_{i\in \mathbb N_{>0}}$ are mutually independent standard m-dimensional Gaussian random variables, the law of the process $X^{N,\gamma_1}(\cdot)$ is the same as its law under $\mathbb{ P}$, and the process $X^{N,\gamma_1}(\cdot)$ and observation $Y^{N,\gamma_1}(\cdot)$ are independent of each other. (These results can be easily shown by checking the joint characteristic function of these processes.)
Furthermore, by the Kallianpur-Striebel formula \cite[Theorem 3]{kallianpur1968estimation}, for any measurable function $\phi$ such that $\phi(X^{N,\gamma_1}(t_i))$ is integrable under $\mathbb{ P}$, there holds
\begin{align}{\label{eq. unnormalized filter to normalized filter}}
\pi^{N,\gamma_1}_{t_i}(\phi) = \frac{{\mathbb E}_{\mathbb P^{N,\gamma_1}} \left[\left. Z^{N,\gamma_1}(t_i) \phi\left(X^{N,\gamma_1}(t_i)\right) \right|  \mathcal Y^{N,\gamma_1}_{t_i}  \right]}{ {\mathbb E} _{\mathbb P^{N,\gamma_1}} \left[ \left. Z^{N,\gamma_1}(t_i) \right|  \mathcal Y^{N,\gamma_1}_{t_i}  \right]}
\end{align}
$\mathbb P$\text{-a.s.} and $\mathbb P^{N,\gamma_1}$\text{-a.s.}.
Recall that $\pi^{N,\gamma_1}_{t_i}(\phi)$ is the notation for $\mathbb E _{\mathbb P}\left[ \phi(X^{N,\gamma_1}(t_i)) ~|~ \mathcal Y^{N,\gamma_1}_{t_i} \right]$.
Here, we name $\rho^{N,\gamma_1}_{t_i} (\phi)\triangleq{\mathbb E}_{\mathbb P^{N,\gamma_1}} \left[\left. Z^{N,\gamma_1}(t_i) \phi\left(X^{N,\gamma_1}(t_i)\right) \right|  \mathcal Y^{N,\gamma_1}_{t_i}  \right]$ as the unnormalized conditional expectation and $ \rho^{N,\gamma_1}_{t_i} (1)$ as the normalization factor.
Note that both $\pi^{N,\gamma_1}_{t_i}(\phi)$ and $\rho^{N,\gamma_1}_{t_i}(\phi)$ are $\mathcal Y^{N,\gamma_1}_{t_i}$-measurable.
Therefore, for any bounded $\phi(\cdot)$, there exist measurable functions, $\hat f^{N,\gamma_1}_{\phi,t_i}(\cdot)$ and $\hat g^{N,\gamma_1}_{\phi,t_i}(\cdot)$ from the domain of the observations to $\mathbb R$ such that $\hat f^{N,\gamma_1}_{\phi,t_i}\left(Y^{N,\gamma_1}_{t_1:t_i}\right)=\pi^{N,\gamma_1}_{t_i}(\phi)$ and $\hat g^{N,\gamma_1}_{\phi,t_i}\left(Y^{N,\gamma_1}_{t_1:t_i}\right)=\rho^{N,\gamma_1}_{t_i}(\phi)$, where $Y^{N,\gamma_1}_{t_1:t_i}\triangleq \left(Y^{N,\gamma_1}(t_1),\dots,Y^{N,\gamma_1}(t_i)\right)$ is the observation process up to the time $t_i$.

Based on \eqref{eq. unnormalized filter to normalized filter}, an algorithm called the particle filter (also known as sequential Monte Carlo method) can be constructed as Algorithm \ref{alg particle filter} to numerically solve the filtering problem. 
In Algorithm \ref{alg particle filter}, the particles $x_j(\cdot)$ mimic the dynamic state $X^{N,\gamma_1}(\cdot)$, and the weights $w_j(\cdot)$ mimic the normalized $Z^{N,\gamma_1}(\cdot)$; therefore, the computed filter can approximate the true filter by \eqref{eq. unnormalized filter to normalized filter}.
A resampling step is executed in each iteration to remove non-significant particles so that sample impoverishment can be avoided \cite{doucet2009tutorial}.

\begin{algorithm}
	\SetAlgoLined
	Input a kernel function $K_t(\cdot,\cdot)$ and observations $Y(\cdot)$\;
	Initialize $M$ particles $ x_1(0),\dots, x_M(0)$ to $X^{N}(0)$ and weights $ w_{1}(0), \dots, w_M(0)$ to $1/M$. Set $t_0=0$\;
	\For{each time point $t_i$ ($i\in\mathbb N_{>0}$)}{
		For $j=1,\dots,M$, sample $x_j(t_i)$ from $K_{t_{i+1}-t_{i}}( x_{j}(t_{i-1}), \cdot)$ and compute weights $w_{j}(t_i)\propto w_{j}(t_{i-1}) g\left(x_{j}(t_i), Y(t_i)\right)$\;
		Compute filter $\bar \pi_{M,t_{i}}(\phi)=  \sum_{j=1}^{M} w_j(t_i)\phi \left(x_{j}(t_i)\right)$\;
		Resample $\{w_{j}(t_i), x_j(t_{i})\}$ to obtain $M$ equally weighted particles 
		$\{{1}/{M}, \bar x_j(t_{i})\}$ and set $\{w_{j}(t_i), x_j(t_{i})\}\leftarrow \{{1}/{M}, \bar x_j(t_{i})\}$\;
	}
	\caption{The particle filter \cite{doucet2009tutorial}}
	\label{alg particle filter}
\end{algorithm}


Similarly, for the filtering problem of the reduced model, we can define another auxiliary function $Z^{\gamma_1}(t_i)= \prod_{j=1}^{i} g\left(X^{\gamma_1}(t_j),Y^{\gamma_1}(t_j)\right)$ and a reference probability $\mathbb{ P}^{\gamma_1}$ by $\left.\frac{\dd \mathbb{ P}^{\gamma_1}}{\dd \mathbb P}\right|_{\mathcal F_{t_i}}= \left(Z^{\gamma_1}(t_i)\right)^{-1}$, under which $\{Y^{\gamma_1}(t_i)\}_{i\in \mathbb N_{>0}}$ are mutually independent standard m-dimensional Gaussian random variables, the law of the process $X^{\gamma_1}(\cdot)$ is the same as its law under $\mathbb{ P}$, and the process $X^{\gamma_1}(\cdot)$ and observation $Y^{\gamma_1}(\cdot)$ are independent of each other.
By the Kallianpur-Striebel formula, for any measurable function $\phi$ such that $\phi(X^{\gamma_1}(t_i))$ is integrable under $\mathbb{ P}$, there holds
\begin{align}{\label{eq. unnormalized filters reduced model}}
\pi^{\gamma_1}_{t_i}(\phi) = \frac{{\mathbb E}_{\mathbb P^{\gamma_1}} \left[\left. Z^{\gamma_1}(t_i) \phi\left(X^{\gamma_1}(t_i)\right) \right|  \mathcal Y^{\gamma_1}_{t_i}  \right]}{ {\mathbb E} _{\mathbb P^{\gamma_1}} \left[ \left. Z^{\gamma_1}(t_i) \right|  \mathcal Y^{\gamma_1}_{t_i}  \right]}
\end{align}
$\mathbb P\text{-a.s.}$ and $\mathbb P^{\gamma_1}\text{-a.s.}$, in which we further denote the numerator by $\rho^{\gamma_1}_{t_i} (\phi)$ and the denominator by $\rho^{\gamma_1}_{t_i} (1)$.
Note that both $\pi^{\gamma_1}_{t_i}(\phi)$ and $\rho^{\gamma_1}_{t_i}(\phi)$ are $\mathcal Y^{\gamma_1}_{t_i}$-measurable.
Therefore, for any bounded $\phi(\cdot)$, there exist measurable functions, $\hat f^{\gamma_1}_{\phi,t_i}(\cdot)$ and $\hat g^{\gamma_1}_{\phi,t_i}(\cdot)$ from the domain of the observations to $\mathbb R$, such that $\hat f^{\gamma_1}_{\phi,t_i}\left(Y^{\gamma_1}_{t_1:t_i}\right)=\pi^{\gamma_1}_{t_i}(\phi)$ and $\hat g^{\gamma_1}_{\phi,t_i}\left(Y^{\gamma_1}_{t_1:t_i}\right)=\rho^{\gamma_1}_{t_i}(\phi)$, where $Y^{\gamma_1}_{t_1:t_i}\triangleq \left(Y^{\gamma_1}(t_1),\dots,Y^{\gamma_1}(t_i)\right)$ is the observation process up to the time $t_i$.



\section{Main results} {\label{Sec. main results}}

Recall that our task is to accurately and computationally efficiently solve the filtering problem for the multi-scale system \eqref{eq. scaling stochastic dynamics}, i.e., to calculate $\{\pi^{N,\gamma_1}_{t_{i}}(\phi)\}_{i\in\mathbb N_{>0}}$.
A straightforward idea to solve this problem is to use the particle filter where the transition kernel of the full model (i.e., $K^{N,\gamma_1}(\cdot,\cdot)$) and observations $Y^{N,\gamma_1}(\cdot)$ are inserted. 
We denote this particle filter by $\bar \pi^{N,\gamma_1}_{M,t_{i}}(\phi)$.
Although it can accurately solve the filtering problem if the resampling method is properly chosen (see \cite[Corollary 2.4.4]{crisan2001particle}), this algorithm involves the simulation of the full model \eqref{eq. scaling stochastic dynamics} and, therefore, can be computationally expensive (see Remark \ref{remark compare computation load of the original model and the reduced one}).
Consequently, we need to figure out a smarter way to approach this problem.

We first show that the exact solution to the filtering problem of the original model can be constructed by solving the filtering problem for the reduced model. 

\begin{theorem}{\label{thm convergence at the first time scale}}
	Assume that \eqref{eq. assumption initial conditons} and \eqref{eq. assumption infinite explosion time gamma1} are satisfied.
	Then, for any $i\in\mathbb N_{>0}$ and any bounded continuous function $\phi(\cdot)$, 
	there holds  $\pi^{N,\gamma_1}_{t_i}(\phi)-\tilde \pi^{N,\gamma_1}_{t_i}(\phi) \triangleq
	\hat{f}_{\phi,t_i}^{N,\gamma_1}(Y^{N,\gamma_1}_{t_1:t_i}) - \hat{f}_{\phi,t_i}^{\gamma_1}(Y^{N,\gamma_1}_{t_1:t_i}) \to 0$, as $N$ goes to infinity, in $\mathbb{ P}$-probability.
\end{theorem}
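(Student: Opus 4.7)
The plan is to reduce the assertion to a convergence statement about deterministic functions on the observation space and then exploit the finite-dimensional convergence supplied by Corollary \ref{cor convergence on discrete time point}. Under $\mathbb{P}^{N,\gamma_1}$ the observations are independent of $X^{N,\gamma_1}$ and $X^{N,\gamma_1}$ retains its original law, so conditioning inside the Kallianpur--Striebel formula gives the explicit representation
\begin{equation*}
\hat{g}^{N,\gamma_1}_{\phi,t_i}(y_{1:i}) = \mathbb{E}\!\left[\phi(X^{N,\gamma_1}(t_i))\prod_{j=1}^{i} g(X^{N,\gamma_1}(t_j),y_j)\right],
\end{equation*}
and an analogous formula for $\hat{g}^{\gamma_1}_{\phi,t_i}$ (and, with $\phi\equiv 1$, for the normalizers). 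For fixed $y$ the integrand is a bounded continuous function of $(X(t_1),\dots,X(t_i))$, because $\phi$ is bounded continuous and the uniform boundedness of the $h_\ell$ makes $g(\cdot,y_j)$ bounded continuous in its first argument. Corollary \ref{cor convergence on discrete time point} therefore delivers the pointwise convergence $\hat{g}^{N,\gamma_1}_{\phi,t_i}(y)\to\hat{g}^{\gamma_1}_{\phi,t_i}(y)$.

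I would then strengthen this to uniform convergence on every compact $K\subset\mathbb{R}^{mi}$. From $\partial_{y_\ell}g(x,y)=h_\ell(x)g(x,y)$ and the boundedness of the $h_\ell$, $g(x,\cdot)$ is Lipschitz on $K$ with a constant independent of $x$; taking products and integrating, $\{\hat{g}^{N,\gamma_1}_{\phi,t_i}\}_N$ together with $\hat{g}^{\gamma_1}_{\phi,t_i}$ is uniformly Lipschitz on $K$, hence equicontinuous. Equicontinuity plus pointwise convergence yields uniform convergence on $K$ by a standard Arzel\`a--Ascoli argument, and the same reasoning applies to the denominators. A direct estimate also gives $g(x,y)\geq \exp(-C(1+\|y\|))$ for some $C$ depending only on the uniform bound of the $h_\ell$ and on $m$, which supplies a strictly positive lower bound on $\hat{g}^{N,\gamma_1}_{1,t_i}(y)$ and $\hat{g}^{\gamma_1}_{1,t_i}(y)$ that is uniform in both $N$ and $y\in K$. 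Combining these facts, $\hat{f}^{N,\gamma_1}_{\phi,t_i}(y)-\hat{f}^{\gamma_1}_{\phi,t_i}(y)\to 0$ uniformly on $K$.

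To conclude convergence in $\mathbb{P}$-probability, I would use that Corollary \ref{cor convergence on discrete time point}, together with the continuity of each $h_\ell$ and the independence of the Gaussian noise, implies $Y^{N,\gamma_1}_{t_1:t_i}\Rightarrow Y^{\gamma_1}_{t_1:t_i}$, so its laws form a tight family. For any $\varepsilon>0$ one can therefore pick a compact $K$ with $\mathbb{P}(Y^{N,\gamma_1}_{t_1:t_i}\in K)>1-\varepsilon$ uniformly in $N$, and combining this with the uniform convergence on $K$ closes the argument. The step I expect to be the main obstacle is precisely this passage to the ratio: one must simultaneously establish uniform convergence of the numerators and uniform positivity of the denominators on a compact region that absorbs the mass of the observations uniformly in $N$, and it is the Lipschitz-plus-uniform-lower-bound strategy sketched above that makes this step go through cleanly.
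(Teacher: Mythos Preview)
Your argument is correct, but it follows a genuinely different route from the paper. The paper invokes the abstract framework of Calzolari--Florchinger--Nappo: it uses the Skorokhod representation theorem to couple $X^{N,\gamma_1}_{t_1:t_i}$ and $X^{\gamma_1}_{t_1:t_i}$ almost surely on a common probability space, forms the product with an independent copy of the Gaussian observations, and then verifies the conditions of a black-box convergence theorem (in particular the $L^1$-convergence of the likelihood ratios $\tilde Z^{N,\gamma_1}(t_i)\to\tilde Z^{\gamma_1}(t_i)$). By contrast, you work directly with the deterministic functions $\hat g^{N,\gamma_1}_{\phi,t_i}$ on the observation space: you obtain pointwise convergence from Corollary~\ref{cor convergence on discrete time point}, upgrade it to local uniform convergence via an equicontinuity (uniform Lipschitz) estimate derived from the boundedness of the $h_\ell$, control the ratio through a uniform lower bound on the normalizer, and finish by tightness of $Y^{N,\gamma_1}_{t_1:t_i}$. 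Your approach is more elementary and self-contained---no Skorokhod coupling, no external framework---and it makes the role of the boundedness of the observation functions very explicit; the paper's approach is more modular and would transplant more readily to settings where an abstract $L^1$-convergence of likelihoods is available but pointwise Lipschitz control of $y\mapsto\hat g_{\phi,t_i}(y)$ is not.
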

\begin{proof}
	The proof is in the appendix.
\end{proof}

Recall that $\hat{f}_{\phi,t_i}^{\gamma_1}(\cdot)$ is a mapping that maps the observation to the solution of the filtering problem for the reduced model.  
The above theorem suggests that the probability of the error of these filters below a given threshold is very close to 1 if the scaling factor, $N$, is large, and, therefore, that $\tilde \pi^{N,\gamma_1}_{t_i} (\phi)\triangleq\hat{f}_{\phi,t_i}^{\gamma_1}(Y^{N,\gamma_1}_{t_1:t_i})$ is a good approximation to the exact filter $ \pi^{N,\gamma_1}_{t_i} (\phi)$.


Let us constructed a particle filter $\bar \pi^{\gamma_1}_{M,t_{i}}(\phi)$ where the transition kernel of the reduced model (i.e., $K^{\gamma_1}(\cdot,\cdot)$) and observations $Y^{N,\gamma_1}(\cdot)$ are inserted.
Clearly, this particle filter is an approximation of the filter $\hat{f}_{\phi,t_i}^{\gamma_1}(Y^{N,\gamma_1}_{t_1:t_i})$.
Based on Theorem \ref{thm convergence at the first time scale}, we can further show that this particle filter $\bar \pi^{\gamma_1}_{M,t_{i}}(\phi)$ can also accurately approximate the true filter $\pi^{N,\gamma_1}_{t_i}(\phi)$ with high probability.


\begin{corollary}{\label{cor particle filter}}
	Assume that \eqref{eq. assumption initial conditons} and \eqref{eq. assumption infinite explosion time gamma1} are satisfied, and the resampling step in the particle filter (Algorithm \ref{alg particle filter}) adopts the multinomial branching method.
	Then for any $i\in\mathbb N_{>0}$, any positive $\delta$, and any bounded continuous function $\phi(\cdot)$, there holds
	\begin{equation*}
	 \lim_{N\to\infty} \lim_{M\to\infty} \mathbb P \left(\left|\bar \pi^{\gamma_1}_{M,t_{i}}(\phi)-\pi^{N,\gamma_1}_{t_{i}}(\phi)\right| >\delta \right) =0.
	\end{equation*}
\end{corollary}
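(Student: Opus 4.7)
The plan is to prove the statement via a standard triangle-inequality decomposition of the total error into a particle-filter Monte Carlo error and a time-scale-reduction bias, and then to take the two limits in the order specified.

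First, I would split
\begin{align*}
\left|\bar\pi^{\gamma_1}_{M,t_i}(\phi)-\pi^{N,\gamma_1}_{t_i}(\phi)\right|
\le\;&\left|\bar\pi^{\gamma_1}_{M,t_i}(\phi)-\tilde\pi^{N,\gamma_1}_{t_i}(\phi)\right|\\
&+\left|\tilde\pi^{N,\gamma_1}_{t_i}(\phi)-\pi^{N,\gamma_1}_{t_i}(\phi)\right|,
\end{align*}
where $\tilde\pi^{N,\gamma_1}_{t_i}(\phi)=\hat f^{\gamma_1}_{\phi,t_i}(Y^{N,\gamma_1}_{t_1:t_i})$ is the reduced-model filter map evaluated on the original observations. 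A union bound reduces the problem to controlling each term separately: the first is purely a Monte Carlo error of the particle filter that uses the reduced kernel $K^{\gamma_1}_t$ together with the real observations, and the second is precisely the quantity handled by Theorem \ref{thm convergence at the first time scale}.

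To control the Monte Carlo term I would fix $N$ and condition on $\mathcal Y^{N,\gamma_1}_{t_i}$. Conditional on these observations, $\bar\pi^{\gamma_1}_{M,t_i}(\phi)$ is exactly the output of Algorithm \ref{alg particle filter} applied to the reduced-model kernel $K^{\gamma_1}_t(\cdot,\cdot)$ and a deterministic observation trajectory, while $\tilde\pi^{N,\gamma_1}_{t_i}(\phi)$ is the corresponding exact filter of the reduced model on that trajectory. Because the $h_\ell$ are bounded, the likelihood $g(x,y)$ is bounded in $x$ for every fixed $y$, so the standard convergence result \cite[Corollary 2.4.4]{crisan2001particle} for the multinomial-resampling particle filter applies realization-wise and delivers $\bar\pi^{\gamma_1}_{M,t_i}(\phi)\to\tilde\pi^{N,\gamma_1}_{t_i}(\phi)$ in probability as $M\to\infty$, for $\mathbb{P}$-almost every observation sequence. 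Since both quantities are bounded by $\|\phi\|_\infty$, dominated convergence lifts the conditional convergence to unconditional in-probability convergence, so for each fixed $N$, the first term exceeds $\delta/2$ with probability tending to zero as $M\to\infty$.

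Finally, Theorem \ref{thm convergence at the first time scale} gives $\bigl|\tilde\pi^{N,\gamma_1}_{t_i}(\phi)-\pi^{N,\gamma_1}_{t_i}(\phi)\bigr|\to 0$ in $\mathbb{P}$-probability as $N\to\infty$, so letting $M\to\infty$ first and then $N\to\infty$ in the union-bound inequality yields the stated double limit. The main obstacle I foresee is the formal justification in the Monte Carlo step: the classical particle-filter convergence theorem is usually phrased for a deterministic observation sequence, whereas here the trajectory $Y^{N,\gamma_1}_{t_1:t_i}$ is random and the reduced-model particles live on a conceptually distinct probability space. The boundedness of $\phi$, $h_\ell$, and of the likelihood makes this step essentially routine, but it has to be written out carefully via conditioning on $\mathcal Y^{N,\gamma_1}_{t_i}$ so that the observation-wise convergence can be transferred back to unconditional in-probability convergence without circularity.
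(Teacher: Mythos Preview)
Your proposal is correct and takes essentially the same approach as the paper: the paper's proof simply states that the result ``is a direct consequence of Corollary 2.4.4 in \cite{crisan2001particle} and Theorem \ref{thm convergence at the first time scale},'' which is exactly your triangle-inequality decomposition into the particle-filter Monte Carlo error (handled by Crisan's result) and the reduction bias (handled by Theorem \ref{thm convergence at the first time scale}). Your write-up merely spells out the conditioning and dominated-convergence details that the paper leaves implicit.
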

\begin{proof}
	It is a direct consequence of Corollary 2.4.4 in the literature \cite{crisan2001particle} and Theorem \ref{thm convergence at the first time scale}.
\end{proof}

Recall that the required computational effort to simulate the reduced model $X^{\gamma}(\cdot)$ is much lower than the effort to simulate $X^{N,\gamma}(\cdot)$ (see Remark \ref{remark compare computation load of the original model and the reduced one}).
As a consequence, particle filter $\bar \pi^{\gamma_1}_{M,t_{i}}(\phi)$, which applies the kernel function $K^{\gamma_1}(\cdot,\cdot)$ and, therefore, is required to simulate the reduced model $X^{\gamma_1}(\cdot)$ in the sampling step, has a much lower computational cost compared with the particle filter $\bar \pi^{N,\gamma_1}_{M,t_{i}}(\phi)$, which applies kernel function $K^{N, \gamma_1}(\cdot,\cdot)$ and is required to simulated the full model.
This fact, together with the accuracy result Corollary \ref{cor particle filter}, indicates that the filter $\bar \pi^{\gamma_1}_{M,t_{i}}(\phi)$ based on the reduced model is a better candidate than $\bar \pi^{N,\gamma_1}_{M,t_{i}}(\phi)$ for solving a practical filtering problem for multiscale intracellular reaction systems.


We need to emphasize that the boundedness condition of $\phi$ is not generally satisfied in biological applications, where concentrations of most target species have no theoretical upper bounds apart from some exceptions, e.g., gene copies. To tackle this problem, we can truncate a target quantity by a large number beyond which the conditional probability is comparatively low, and the tail event contributes little to the conditional expectation. Then, an estimation of the truncated quantity generated by the particle filtering applying the reduced model can provide an accurate approximation to the conditional expectation of the target quantity.

\section{Numerical example}{\label{Sec. numerical simulation}}

In this section, we illustrate our approach using a numerical example.
We consider a gene expression model involving four species: a gene having on and off states (denoted respectively as $S_2$ and $S_1$), the mRNA it transcribes (denoted as $S_3$), and a fluorescent protein it expresses (denoted as $S_4$), and six reactions: $S_1 \ce{->[$k_1$]} S_2 $ (gene activation), $S_2 \ce{->[$k_2$]} S_1$  (gene deactivation), $S_2 \ce{->[$k_3$]} S_3 + S_2$ (gene transcription), $S_3	\ce{ ->[$k_{4}$]}  S_3+S_4$ (translation), $S_3 \ce{->[{$k_5$}]} \emptyset$ (mRNA degradation), and $S_4 \ce{ ->[$k_{6}$]} \emptyset $ (protein degradation).

For this system, we take $N=100$, $\alpha_4=1$, and $\alpha_i=0$ for $1\leq i \leq 3$, i.e., the cellular system consists of hundreds of fluorescent protein molecules but very few copies of other molecules. 
The values of reaction constants and their scaling exponents are shown in Table \ref{table reaction constants}, where the unit of both reaction constants and scale rates is ``minute$^{-1}$".
Also, we assume $X^N_{1}(0)$ to have a binary distribution with mean $1/3$, $X^{N}_2(0)$ to satisfy $X^{N}_2(0)=1- X^{N}_1(0)$, $X^{N}_3(0)$ to have a Poisson distribution with mean 2, and $X^{N}_4(0)$ to also have a Poisson distribution with mean 2.
Moreover, we assume that all initial conditions except $X_2(0)$ are independent of each other.
\begin{table}[h]
	\begin{tabular}{|cr|cr|cr|cr|}
		\hline
		\multicolumn{2}{c}{Reaction constants} & \multicolumn{2}{c}{Exponents} &  \multicolumn{2}{c}{Scaled rates} &  \multicolumn{2}{c}{Reaction scale}  \\
		\hline
		$k_1$ & $1.40 \times 10^{-2}$  & $\beta_1$  &  0&   $k'_1$ & 0.0140 &  $\beta_{1}+ \alpha_1$ & 0 \\	$k_2$ &  $8.40\times 10^{-3}$ & $\beta_2$ &  $0$&  $k'_2$& 0.00840 & $\beta_{2} + \alpha_2$ & 0 \\
		$k_3$ &  $7.15\times 10^{-1}$&  $\beta_3$&  0&  $k'_3$& 0.715& $\beta_{3} + \alpha_2$ & 0 \\
		$k_4$ & $3.90\times 10^{+1}$ & $\beta_4$ &  1 &  $k'_4$& 0.390& $\beta_{4} + \alpha_3$ & 1 \\
		$k_5$ &  $1.99\times 10^{-1}$&  $\beta_5$&  $0$&  $k'_5$& 0.199 & $\beta_{5} + \alpha_3$& 0\\
		$k_6$ &  $3.79\times 10^{-1}$&  $\beta_6$&  0&  $k'_6$& 0.379& $\beta_{6} + \alpha_4$ & 1 \\
		\hline
	\end{tabular}
	\caption{Scaling exponents for reaction rates}
	\label{table reaction constants}
\end{table}
With this setting, we can easily calculate that $\gamma_1=0$, i.e., the fastest time scale is 0, and the full dynamic model satisfies
\begin{small}
\begin{align*}
	&X^{N,\gamma_1}(t) \\
	&= X^{N,\gamma_1}(0) + 
	\zeta_1
	R_1 \left( k'_1  \int_0^t X^{N,\gamma_1}_{1} (s) \dd s \right) \\
	&\quad + 
    \zeta_2
	R_{2}\left(k'_{2} \int_0^t X^{N,\gamma_1}_2(s) \dd s\right)
	+		\zeta_3
	R_{3} \left(k'_3\int_0^t X^{N,\gamma_1}_2(s) \dd s\right)\\
	&\quad+
	\zeta_4N^{-1}R_4\left(k'_4  N \int_0^t X^{N,\gamma_1}_3(s) \dd s\right)
    +
	\zeta_5 R_5\left(k'_5\int_0^t X^{N,\gamma_1}_{3}(s) \dd s\right) \\
	&\quad +
	\zeta_6 N^{-1}R_6\left(k'_6 N  \int_0^t X^{N,\gamma_1}_4(s) \dd s\right) 
\end{align*}
\end{small}where $\zeta_1=(-1,1,0,0)^{\top}$, $\zeta_2=(1,-1,0,0)^{\top}$, $\zeta_3=(0,0,1,0)^{\top}$, $\zeta_4=(0,0,0,1)^{\top}$, $\zeta_5=(0,0,-1,0)^{\top}$, and $\zeta_6=(0,0,0,-1)^{\top}$;
the reduced dynamics satisfies
\begin{small}
\begin{align*}
	&X^{\gamma_1}(t) \\
	&= \lim_{N\to \infty}X^{N,\gamma_1}(0)+ 
		R_1 \left( k'_1  \int_0^t X^{N,\gamma_1}_{1} (s) \dd s \right) \\
	&\quad + 
	\zeta_2
	R_{2}\left(k'_{2} \int_0^t X^{N,\gamma_1}_2(s) \dd s\right)
	+		\zeta_3
	R_{3} \left(k'_3\int_0^t X^{N,\gamma_1}_2(s) \dd s\right)\\
	&\quad + \zeta_4\int_0^t k'_4 X^{N,\gamma_1}_3(s) \dd s  +
	\zeta_5 R_5\left(k'_5\int_0^t X^{N,\gamma_1}_{3}(s) \dd s\right) \\
	&\quad 
	 +
	\zeta_6 \int_0^t k'_6  X^{N,\gamma_1}_4(s) \dd s.
\end{align*}
\end{small}We simulate the full dynamics by a modified next reaction method \cite{anderson2007modified} and the reduced model by Algorithm 2 in \cite{duncan2016hybrid}.

We assume that light intensity signals can be observed every 2 minutes from a fluorescent microscope 
and satisfy
\begin{equation*}
Y^{N,\gamma_1}(t)= h\left(X^{N,\gamma_1}(t)\right) + W(t)
\end{equation*}
where $h(x)= \left(10 \times x_4 \right) \wedge 10^{3}$ with $10^3$ being the measurement range, and $W(t)$ is a sequence of mutually independent standard Gaussian random variables.
Our goal is to provide estimates of latent dynamic states, including DNA copies, mRNA copies, and protein concentrations, based on readouts.



\begin{figure}[h!]
	\centering
	\includegraphics[width= 0.4 \textwidth]{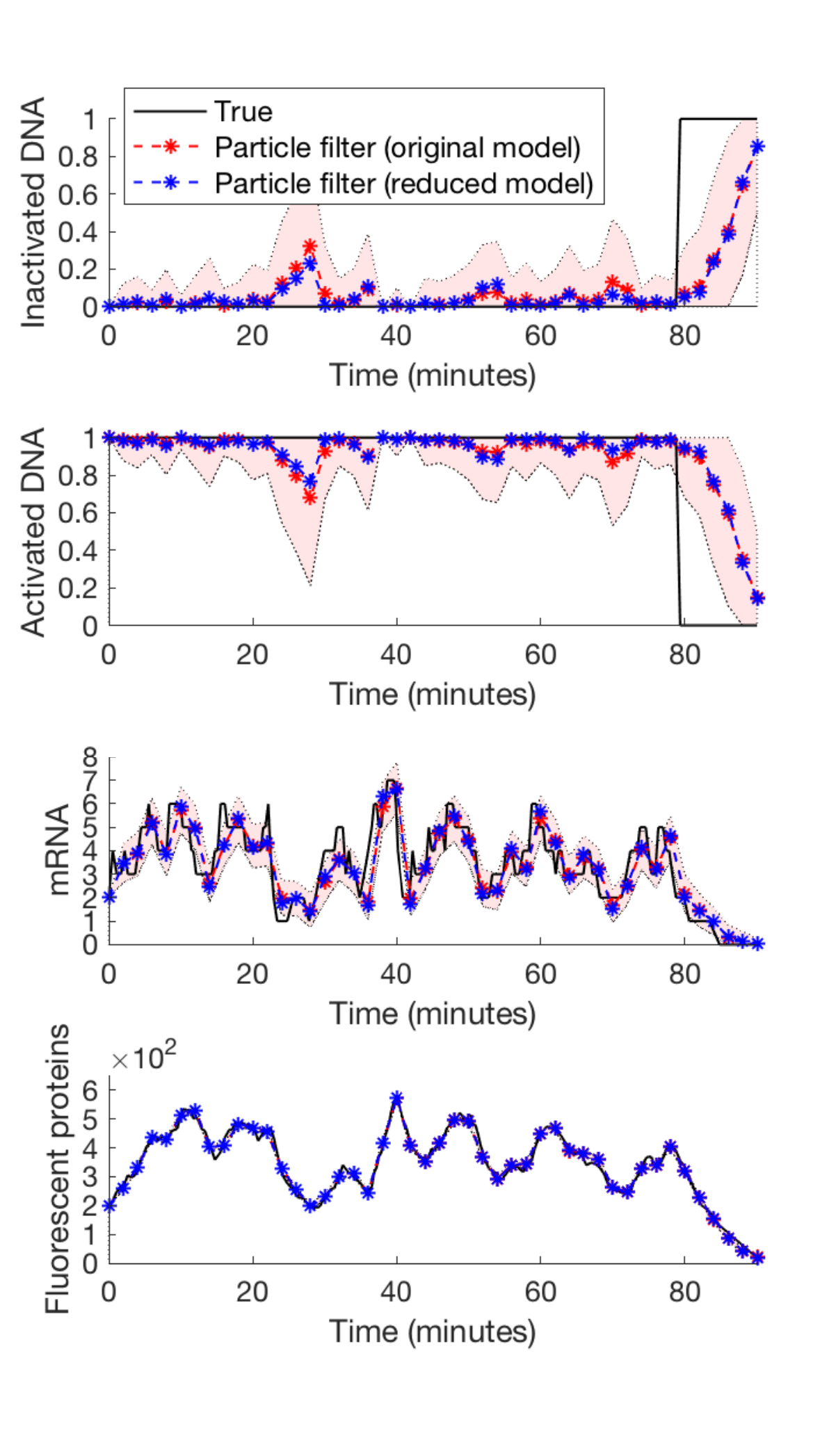}
	\caption{Comparison of different particle filters in the numerical experiment. The black line, red line, and blue line represent, respectively, the reference signal (the true signal), the particle filter $\bar\pi^{N,\gamma_1}_{M,t} (\phi)$ where the original model is applied in the sampling step, and the particle filter $\bar\pi^{\gamma_1}_{M,t} (\phi)$ where the reduced model is applied in the sampling step.
	Both particle filters adopt the multinomial branching method in the resampling step.	
    The light red region indicate the ``mean $\pm$ standard derivation" credible interval of the filter $\bar\pi^{N,\gamma_1}_{M,t} (\phi)$ (red line), which is computed by the particle filter using original model. 
		\label{fig numerical simulation at the first time scale}
	}
\end{figure}

A numerical simulation is shown in Fig \ref{fig numerical simulation at the first time scale}, where we first simulated the full dynamic system and observations, which were taken as true signals, and then applied two different types of particle filters to estimate the system. 
Here, we set the particle population to be 5000.
From the results, we can observe that both filters can follow the behavior of the true signal, and the particle filter applying the reduced model is within the credible interval (light red region) most of the time.
In the meanwhile, the particle filter applying the original model took 9.44 seconds in average to output the estimation in each iteration, whereas the particle filter  applying the reduced model took only 0.95 seconds in each iteration, 10 times faster than its opponent.
These results indicate that the proposed particle filter applying the reduced model is both accurate and computationally efficient in solving filtering problems for multi-scale stochastic chemical reaction network systems.

\section{Conclusion}{\label{Sec. conclusion}}

In this paper, we provide efficient particle filters to solve the filtering problem for multiscale stochastic reaction network systems, based on a time-scale separation technique.
We first show that the solution of the filtering problem for the original systems can be well approximated by the solution of the filtering problem for a reduced model that represents the dynamics as a hybrid process. 
Since the reduced model is based on exploiting the time-scale separation in the original network, the reduced model can greatly reduce the computational complexity to simulate the dynamics.
This enables us to develop efficient particle filters to solve the filtering problem for the original model by applying the particle filter to the reduced model.
Finally, a numerical example is presented to illustrate our approach.

There are a few topics deserving further investigation in future work.
First, since the variability of model parameters can greatly influence the performance of a filter, it is worth investigating if we can put parameter inference and latent state estimation into the same framework to improve the performance of the proposed particle filter.
Second, it is worth extending the convergence result of the particle filter (see Corollary \ref{cor particle filter}) to unbounded function scenarios, as many biological states are not necessarily bounded.
Third, we will also apply the proposed filter to the optimal control problem for an intracellular system.

\appendix

In this section, we give the outline of the proof of theorem \ref{thm convergence at the first time scale}.
Here, we utilize a framework proposed in \cite{calzolari2006approximation}, which require us to construct random variables $\tilde X^{N,\gamma_1}_{t_1:t_i}$, $\tilde X^{\gamma_1}_{t_1:t_i}$, $\tilde Y^{\gamma_1}_{t_1:t_i}$, $\tilde Z^{N,\gamma_1}(t_i)$ and $\tilde Z^{\gamma_1}(t_i)$ on a common probability space $\left(\tilde \Omega^{\gamma_1}, \tilde{\mathcal F}^{\gamma_1}, \tilde{ \mathbb Q}^{\gamma_1} \right)$ at each time point $t_i$ 
such that 
\begin{enumerate}[label=(A.\arabic*), itemindent=1em]
	\item  random variables $\left( \tilde  X^{N,\gamma_1}_{t_1:t_i}, \tilde  Y^{\gamma_1}_{t_1:t_i}, \tilde Z^{N,\gamma_1}(t_i)\right)$ take values in $\mathbb R^{n\times i} \times \mathbb R^{m\times i} \times \mathbb R$ and have the same law as $\left(  X^{N,\gamma_1}_{t_1:t_i},  Y^{N,\gamma_1}_{t_1:t_i}, Z^{N,\gamma_1}(t_i)\right)$ on $\left( \Omega, {\mathcal F}, {\mathbb{ P}}^{N,\gamma_1} \right)$. {\label{eq. assumption 1.1 of the big frame work}}
	\item the random variable $\left( \tilde  X^{\gamma_1}_{t_1:t_i}, \tilde  Y^{\gamma_1}_{t_1:t_i}, \tilde Z^{\gamma_1}(t_i)\right)$ takes values in $\mathbb R^{n\times i} \times \mathbb R^{m\times i} \times \mathbb R$ and has the same law as $\left(  X^{\gamma_1}_{t_1:t_i},  Y^{\gamma_1}_{t_1:t_i}, Z^{\gamma_1}(t_i)\right)$ on $\left( \Omega, {\mathcal F}, {\mathbb{ P}}^{\gamma_1} \right)$. {\label{eq. assumption 1.2 of the big frame work}}
	\item $\hat f_{\phi,t}^{N,\gamma_1}(\tilde Y^{\gamma_1}_{t_1:t_i}) \to  \hat f_{\phi,t}^{\gamma_1}(\tilde Y^{\gamma_1}_{t_1:t_i})$ in $\tilde{ \mathbb Q}^{\gamma_1}$-probability. \label{eq. assumption 1.3 of the big frame work}
	\item $\lim_{N\to \infty}\mathbb E_{\tilde{ \mathbb Q}^{\gamma_1}}\left[\left|\tilde Z^{N,\gamma_1}(t_i)-\tilde Z^{\gamma_1}(t_i)\right|\right]=0$. \label{eq. assumption 1.4 of the big frame work}
\end{enumerate}
If we succeed in finding the above random variables, then the convergence result is guaranteed by the following theorem.
\begin{theorem}[Adapted from \cite{calzolari2006approximation}] \label{theorem framework first time scale}
	For any $i> 0$ and any measurable function $\phi(\cdot)$, if \ref{eq. assumption 1.1 of the big frame work}, \ref{eq. assumption 1.2 of the big frame work}, \ref{eq. assumption 1.3 of the big frame work}, and \ref{eq. assumption 1.4 of the big frame work} are satisfied,
	then $\pi^{N,\gamma_1}_{t_i} (\phi) - \tilde \pi^{N,\gamma_1}_{t_i} (\phi)\to 0$ in $\mathbb{ P}$-probability.
\end{theorem}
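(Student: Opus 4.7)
The plan is to translate the $\mathbb P$-probability convergence into a $\tilde{\mathbb Q}^{\gamma_1}$-probability convergence of a deterministic functional evaluated at a Gaussian input, which is exactly what assumption \ref{eq. assumption 1.3 of the big frame work} delivers. The bridge between $\mathbb P$ and $\tilde{\mathbb Q}^{\gamma_1}$ passes through the reference measure $\mathbb P^{N,\gamma_1}$: one moves from $\mathbb P$ to $\mathbb P^{N,\gamma_1}$ using the Radon-Nikodym density $Z^{N,\gamma_1}(t_i)$, and from $\mathbb P^{N,\gamma_1}$ to $\tilde{\mathbb Q}^{\gamma_1}$ through the equal-in-law statement \ref{eq. assumption 1.1 of the big frame work}.

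Concretely, setting $F_N \triangleq \hat f^{N,\gamma_1}_{\phi,t_i} - \hat f^{\gamma_1}_{\phi,t_i}$, I would apply, for arbitrary $\epsilon > 0$, the change of measure followed by the Cauchy-Schwarz inequality to obtain
\begin{align*}
&\mathbb P \left( \left| F_N \left(Y^{N,\gamma_1}_{t_1:t_i}\right) \right| > \epsilon \right) = \mathbb E_{\mathbb P^{N,\gamma_1}} \left[ Z^{N,\gamma_1}(t_i)\, \mathbbold 1_{A_N} \right] \\
&\quad \le \mathbb E_{\mathbb P^{N,\gamma_1}} \left[ \left(Z^{N,\gamma_1}(t_i)\right)^2 \right]^{1/2} \mathbb P^{N,\gamma_1} \left(A_N\right)^{1/2},
\end{align*}
where $A_N = \left\{ \left| F_N(Y^{N,\gamma_1}_{t_1:t_i}) \right| > \epsilon \right\}$. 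The first factor is uniformly bounded in $N$: since under $\mathbb P^{N,\gamma_1}$ the observation coordinates are i.i.d.\ standard Gaussians independent of $X^{N,\gamma_1}$, conditioning on $X^{N,\gamma_1}$ yields $\mathbb E_{\mathbb P^{N,\gamma_1}}[(Z^{N,\gamma_1}(t_i))^2] \le \exp\left( i \|h\|^2_{\infty} \right)$ with $\|h\|_\infty = \max_\ell \sup_x |h_\ell(x)|$. For the second factor, assumption \ref{eq. assumption 1.1 of the big frame work} implies that $Y^{N,\gamma_1}_{t_1:t_i}$ under $\mathbb P^{N,\gamma_1}$ has the same (standard Gaussian) distribution as $\tilde Y^{\gamma_1}_{t_1:t_i}$ under $\tilde{\mathbb Q}^{\gamma_1}$ (a marginal consistent with \ref{eq. assumption 1.2 of the big frame work}), so $\mathbb P^{N,\gamma_1}(A_N) = \tilde{\mathbb Q}^{\gamma_1}\!\left( \left| F_N(\tilde Y^{\gamma_1}_{t_1:t_i}) \right| > \epsilon \right)$, which tends to $0$ as $N \to \infty$ by \ref{eq. assumption 1.3 of the big frame work}. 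Combining the two bounds gives the claim.

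The main obstacle, and the reason condition \ref{eq. assumption 1.4 of the big frame work} is listed, lies in dispensing with the implicit boundedness I used. For bounded $\phi$ the inequality $|F_N| \le 2 \|\phi\|_\infty$ is automatic and the Cauchy-Schwarz bound is immediately effective; this is the setting in which Theorem \ref{theorem framework first time scale} is invoked in the proof of Theorem \ref{thm convergence at the first time scale}. For a genuinely unbounded measurable $\phi$, the strategy would be to truncate $\phi$ at some level $L$, apply the bounded case to the truncated filter, and then use \ref{eq. assumption 1.4 of the big frame work} together with appropriate integrability of $\phi(\tilde X^{N,\gamma_1}(t_i))$ and $\phi(\tilde X^{\gamma_1}(t_i))$ under $\tilde Z^{N,\gamma_1}$ and $\tilde Z^{\gamma_1}$ to show that the truncation error in both the normalized and unnormalized filters vanishes uniformly in $N$ as $L \to \infty$. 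This uniform-integrability bookkeeping on the common probability space is where I expect the principal technical difficulty to sit.
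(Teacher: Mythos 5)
Your proof is correct, and it supplies something the paper itself omits: the paper states this theorem as ``adapted from \cite{calzolari2006approximation}'' and gives no proof at all, deferring entirely to that reference. Your route --- rewrite $\mathbb P\left(\left|F_N\left(Y^{N,\gamma_1}_{t_1:t_i}\right)\right|>\epsilon\right)$ as $\mathbb E_{\mathbb P^{N,\gamma_1}}\left[Z^{N,\gamma_1}(t_i)\mathbbold 1_{A_N}\right]$ using the Radon--Nikodym density on $\mathcal F_{t_i}$, split by Cauchy--Schwarz, bound the second moment of $Z^{N,\gamma_1}(t_i)$ uniformly in $N$ via the boundedness of $h$ and the Gaussianity of the observations under the reference measure, and identify $\mathbb P^{N,\gamma_1}(A_N)$ with $\tilde{\mathbb Q}^{\gamma_1}\left(\left|F_N\left(\tilde Y^{\gamma_1}_{t_1:t_i}\right)\right|>\epsilon\right)$ through the equality of the standard Gaussian observation laws guaranteed by \ref{eq. assumption 1.1 of the big frame work} --- is precisely the mechanism behind the cited framework, and every step checks out (the conditional second moment is $\exp\left(\sum_{j\le i}\sum_{\ell\le m} h_\ell^2\left(X^{N,\gamma_1}(t_j)\right)\right)\le \exp\left(i\,m\,\|h\|_\infty^2\right)$; you dropped the factor $m$, which is immaterial). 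Two remarks. First, your closing paragraph is a red herring: Cauchy--Schwarz is applied to $Z^{N,\gamma_1}(t_i)\mathbbold 1_{A_N}$, and the indicator is bounded regardless of $\phi$, so the argument of your second paragraph already proves the theorem for any measurable $\phi$ for which the filter functions exist and \ref{eq. assumption 1.3 of the big frame work} holds --- no truncation and no bound on $F_N$ are used anywhere. Second, and relatedly, \ref{eq. assumption 1.4 of the big frame work} is never invoked in your proof; this is not a gap but a consequence of the paper's formulation, in which \ref{eq. assumption 1.3 of the big frame work} is already stated for the \emph{normalized} filter maps $\hat f$, whereas in \cite{calzolari2006approximation} the hypothesis is placed on the unnormalized filters and the analogue of \ref{eq. assumption 1.4 of the big frame work} is what controls the normalizing constants. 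As the theorem is stated here, \ref{eq. assumption 1.4 of the big frame work} is redundant rather than a device for unbounded $\phi$.
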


We can construct such random variables and the common probability space as follows.
First, for any $i\in\mathbb N_{>0}$, we can apply the Skorokhod representation theorem to constructing  a common probability space $\left(\tilde \Omega_1^{\gamma_1}, \tilde{\mathcal F}_1^{\gamma_1}, \tilde{\mathbb Q}_1^{\gamma_1} \right)$ on which are defined $\mathbb R^{n\times i}$-valued random variables $\tilde X^{N,\gamma_1}_{t_1:t_i}$ and $\tilde X^{\gamma_1}_{t_1:t_i}$ with the laws of $ X^{N,\gamma_1}_{t_1:t_i}$ and $ X^{\gamma_1}_{t_1:t_i}$, respectively, such that $\lim_{N\to\infty} \tilde X^{N,\gamma_1}_{t_1:t_i}= \tilde X^{\gamma_1}_{t_1:t_i}$ $\tilde{\mathbb Q}_1^{\gamma_1}$-almost surely.
Then, we term $\left(\tilde \Omega_2^{\gamma_1}, \tilde{\mathcal F}_2^{\gamma_1}, \tilde{\mathbb Q}_2^{\gamma_1} \right)$ as a copy of the probability space $\left( \Omega, {\mathcal F}_{t_i}, \mathbb P^{\gamma_1}\right)$ and $\tilde Y^{\gamma_1}_{t_1:t_i} $ as a copy of $Y^{\gamma_1}_{t_1:t_i}$ in this probability space.
Moreover, we define $\tilde Z^{N\gamma_1}(t_i)\triangleq \prod_{j=1}^{i} g\left(\tilde X^{N\gamma_1}(t_j),\tilde Y^{\gamma_1}(t_j)\right)$ and $\tilde Z^{\gamma_1}(t_i)\triangleq \prod_{j=1}^{i} g\left(\tilde X^{\gamma_1}(t_j),\tilde Y^{\gamma_1}(t_j)\right)$ on the product probability space $\left(\tilde \Omega^{\gamma_1}, \tilde{\mathcal F}^{\gamma_1}, \tilde{\mathbb Q}^{\gamma_1} \right) \triangleq \left(\tilde \Omega_1^{\gamma_1}\times\tilde \Omega_2^{\gamma_1}, \tilde{\mathcal F}_1^{\gamma_1} \times \tilde{\mathcal F}_2^{\gamma_1},\tilde{\mathbb Q}_1^{\gamma_1}\times \tilde{\mathbb Q}_2^{\gamma_1} \right)$, 
where  $\tilde X^{N,\gamma_1}(t_j)$ is the $j$-th column of $\tilde X^{N,\gamma_1}_{t_1:t_j}$, $\tilde X^{\gamma_1}(t_j)$ is the $j$-th column of $\tilde X^{\gamma_1}_{t_1:t_j}$, and $\tilde Y^{\gamma_1}(t_j)$ is the $j$-th column of $\tilde Y^{\gamma_1}_{t_1:t_j}$.

Obviously, such defined random variables and the common probability space satisfy \ref{eq. assumption 1.1 of the big frame work} and \ref{eq. assumption 1.2 of the big frame work}.
The condition \ref{eq. assumption 1.4 of the big frame work} can be proven straightforwardly by the fact that $\lim_{N\to\infty} \tilde X^{N,\gamma_1}_{t_1:t_i}= \tilde X^{\gamma_1}_{t_1:t_i}$ $\tilde{\mathbb Q}_1^{\gamma_1}$-almost surely.
Similarly, we can also show \ref{eq. assumption 1.3 of the big frame work} by using \eqref{eq. unnormalized filter to normalized filter} and \eqref{eq. unnormalized filters reduced model}. 
Consequently, Theorem \ref{thm convergence at the first time scale} is proven.

\end{document}